\documentclass[11pt]{article}

\usepackage[margin=1in]{geometry}
\usepackage{amsmath,amssymb,amsthm,mathtools}
\usepackage{microtype}
\usepackage{enumitem}
\usepackage[hidelinks]{hyperref}
\usepackage{cleveref}

\newtheorem{theorem}{Theorem}
\newtheorem{lemma}{Lemma}

\newcommand{\Min}{\operatorname{Min}}
\newcommand{\E}{\mathbb{E}}
\newcommand{\1}{\mathbf{1}}
\newcommand{\width}{\operatorname{width}}
\newcommand{\Rlv}{R^{\mathrm{LV}}_{n,w}}

\title{The Randomized Query Complexity of Finding Minimal Elements in Bounded-Width Posets}
\author{Luyao Fan\textsuperscript{1}, Jiayang Zou\textsuperscript{1,2}, Jiayang Gao\textsuperscript{1}, and Jia Wang\textsuperscript{1}\\[0.5ex]
\small \textsuperscript{1}Shanghai Jiao Tong University, Shanghai, China; \texttt{\char123 fanluyao, qiudao, gjy0515, jiawang\char125 @sjtu.edu.cn}\\
\small \textsuperscript{2}Stanford University, Stanford, CA, USA; \texttt{jyangzou@stanford.edu}}

\date{}

\begin{document}
\maketitle

\begin{abstract}
We study the zero-error randomized query complexity of finding all minimal elements in an unknown $n$-element poset of width at most $w$. Previous work of Daskalakis, Karp, Mossel, Riesenfeld, and Verbin
established a randomized upper bound with leading term
$\frac{w+1}{2}n$, while the corresponding lower bound left a
multiplicative gap in the leading constant that approaches a factor
of 2 as $w$ grows. We prove the finite lower bound
\(
R^{\mathrm{LV}}_{n,w}\ge
\frac{w+1}{2}n-\frac{w(w+3)}4
+w\left(1-\frac1w\right)^n
+\frac{w(w-1)}4\left(1-\frac2w\right)^n.
\)
Consequently, for every fixed $w$,
\(
R^{\mathrm{LV}}_{n,w}
=
\left(\frac{w+1}{2}+o(1)\right)n.
\)
Thus the known randomized upper bound has the correct asymptotic leading constant for every fixed width. The argument is based on a pairwise accounting of incomparable queries under a random-chain hard distribution, using a component-flip involution and a unique ownership property for incomparable comparisons. Generative AI was used in the preparation of this manuscript.
\end{abstract}

\section{Introduction}

Let $P=(V,\prec)$ be a poset.  In the comparison-query model, a query on distinct $x,y\in V$
returns one of
\[
x\prec y,\qquad y\prec x,\qquad x\parallel y.
\]
The task is to determine
\[
\Min(P)=\{x\in V:\nexists y\in V\text{ with }y\prec x\}
\]
under the promise $\width(P)\le w$.

Comparison problems on partially ordered sets go back at least to Faigle and Tur\'an~\cite{FaigleTuran}.
Boldi, Chierichetti, and Vigna studied related top-element extraction problems~\cite{BoldiChierichettiVigna}.
Daskalakis, Karp, Mossel, Riesenfeld, and Verbin studied sorting and selection in bounded-width
posets~\cite{DKMRV}.  For finding all minimal elements they gave a Las Vegas algorithm using
\[
\frac{w+1}{2}n + O_w(\log n)
\]
expected queries for fixed $w$, while their randomized lower bound has
leading term $\frac{w+3}{4}n$~\cite[Sec.~4.2.2]{DKMRV}.
Thus, prior to the present work, the ratio between the upper- and
lower-bound leading constants was
\[
\frac{2(w+1)}{w+3},
\]
which approaches $2$ as $w \to \infty$.

\begin{theorem}\label{thm:main}
For all integers $n\ge1$ and $w\ge2$,
\[
\Rlv\ge
\frac{w+1}{2}n-\frac{w(w+3)}4
+w\left(1-\frac1w\right)^n
+\frac{w(w-1)}4\left(1-\frac2w\right)^n.
\]
Hence, for every fixed $w$,
\[
R^{\mathrm{LV}}_{n,w}
=\left(\frac{w+1}{2}+o(1)\right)n.
\]
\end{theorem}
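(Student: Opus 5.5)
The plan is to apply Yao's minimax principle to a random-chain hard distribution. We place each of the $n$ elements independently and uniformly into one of $w$ chains $C_1,\dots,C_w$, and order each chain by an independent uniformly random linear order. Elements in different chains are incomparable, so the poset has width at most $w$. It then suffices to lower-bound the expected number of queries of an arbitrary deterministic algorithm that always outputs $\Min(P)$ correctly. We split the queries into \emph{comparable} ones (both endpoints in the same chain) and \emph{incomparable} ones (endpoints in different chains), and bound the two expected counts separately.

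For comparable queries we use a certificate argument. At termination, every non-minimal element $x$ must be certified non-minimal. In our distribution this can only happen through a chain of answered relations $y\prec\cdots\prec x$, so in particular $x$ must have been the larger endpoint of some comparable query. Since queries reveal only a forest of such relations within each chain, each chain of size $k\ge1$ requires at least $k-1$ comparable queries. Taking expectations gives at least
\[
n-\E[\#\{i:C_i\neq\emptyset\}] = n-w+w(1-1/w)^n
\]
comparable queries. This accounts exactly for the terms $n$, $-w$ and $w(1-1/w)^n$ in the target bound.

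For incomparable queries we work pair by pair. For each unordered pair $\{i,j\}$ of chains with $a=|C_i|$ and $b=|C_j|$, we aim to show that the expected number of incomparable queries between $C_i$ and $C_j$, conditioned on the set $S=C_i\cup C_j$ and on all data outside $S$, is at least $\frac{a+b-1}{2}$ whenever $S\ne\emptyset$. Summing over the $\binom w2$ pairs then gives
\[
\tfrac{w-1}{2}n-\tfrac{w(w-1)}4\bigl(1-(1-2/w)^n\bigr).
\]
Adding this to the comparable bound yields exactly the stated expression, since $-w-\frac{w(w-1)}4=-\frac{w(w+3)}4$. The asymptotic statement then follows from the known upper bound of~\cite{DKMRV}.

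The pairwise bound is the main obstacle, and I would prove it with two tools.
\begin{enumerate}[label=(\roman*)]
\item \emph{Unique ownership.} Each incomparable query $x\parallel y$ is charged to the unique pair $\{\text{chain}(x),\text{chain}(y)\}$. Conditioned on $S$, the algorithm's view restricted to $S$ looks like a two-chain instance.
\item \emph{Component-flip involution.} Consider the graph on $S$ formed by the comparable answers revealed so far. Each connected component lies within a single chain. Flipping the label $i\leftrightarrow j$ of one component gives a measure-preserving involution of the distribution on $S$, and it leaves every answer unchanged unless some incomparable query crossed that component.
\end{enumerate}
At termination, correctness forces the output $\Min$ to differ between an instance and its flip whenever a flipped component would merge with, or separate from, the other chain's minimum. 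From this I would argue that the incomparable-query graph on $S$ must connect the components enough to pin down the chain labels, which requires roughly one incomparable query per component beyond the first. Averaging over each involution pair then gives the factor $\frac12$, since an element is "paid for" in at least one of the two paired instances, and the $-\frac12$ comes from the single free component. Making this counting airtight against adaptive algorithms that interleave comparable and incomparable queries is where I expect the real difficulty to lie.
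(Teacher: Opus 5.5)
Your comparable-query bound and the final arithmetic match the paper. The pairwise incomparable bound, which you rightly call the heart of the matter, is not established, and the involution as you set it up cannot establish it.

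Flipping a component of the graph of \emph{comparable answers} does not preserve the transcript. Any earlier incomparable query from that component into the other chain turns comparable after the flip, so the flip is not an involution on the inputs reaching a given decision-tree node. Your terminal-state reasoning also cannot produce $\frac{|S|-1}{2}$. At termination the comparable graph on $S$ has at most two components, since each chain is connected through direct witnesses. Read at termination, ``one incomparable query per component beyond the first'' therefore yields a single incomparable query. The factor $\frac12$ has to be extracted query by query along the execution, which is exactly the adaptivity issue you leave open.

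The missing idea is to use the graph $G^{ab}$ of \emph{all} queries between vertices of $S$, whatever their answers. Flipping the labels of an entire current component of this graph preserves every past answer:
\begin{itemize}
\item inside the component, equal labels stay equal and unequal labels stay unequal;
\item orientations are preserved, provided you fix one global permutation $\pi$ rather than independent per-chain orders;
\item no earlier query joins the component to the rest of $S$;
\item queries to vertices outside $S$ remain incomparable.
\end{itemize}
Conditioned on $S$, $\pi$, and the colors outside $S$, the $2^{|S|}$ labelings are equiprobable and all legal. So at any node whose next query joins two distinct current components (an external merge), this flip is an involution on the inputs reaching that node that swaps the comparable and incomparable outcomes. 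That query is therefore incomparable with conditional probability exactly $\frac12$.

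You also need $G^{ab}$ to be connected at termination. Each color class is connected by the witness lemma. If no query joined the two classes, recoloring one class into the other would preserve the transcript but change $\Min(P)$, contradicting exactness. Connectivity forces exactly $(|S|-1)_+$ external merges in every execution. Summing the per-node pairing then gives $\E Q_{ab}=\frac12\E(M_{ab}-1)_+$, and your unique-ownership step completes the proof.
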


\section{Model and hard distribution}

Let $V=[n]$.  A deterministic algorithm is \emph{exact} if it terminates and outputs $\Min(P)$
on every legal input.  For a Las Vegas randomized algorithm $\mathcal A$, let $Q_{\mathcal A}(P)$
be its number of queries and define
\[
R^{\mathrm{LV}}_{n,w}
=
\inf_{\mathcal A}
\sup_{\width(P)\le w}
\E Q_{\mathcal A}(P).
\]

Independently assign each $x\in V$ a color
\[
C_x\sim\operatorname{Unif}([w])
\]
and independently choose a uniformly random permutation $\pi$ of $V$.
Write $r(x)$ for the position of $x$ in $\pi$, and define
\begin{equation}\label{eq:hard}
x\prec y
\quad\Longleftrightarrow\quad
C_x=C_y\ \text{ and }\ r(x)<r(y).
\end{equation}
Every realization is a disjoint union of at most $w$ chains, hence has width at most $w$.
Denote this distribution by $\mathcal D_{n,w}$.

Fix a deterministic exact algorithm $A$.
Let $N_=$ and $N_\parallel$ denote the numbers of comparable and incomparable query answers,
respectively, so
\begin{equation}\label{eq:split}
Q_A=N_=+N_\parallel.
\end{equation}

\section{Comparable queries}

\begin{lemma}[Direct witness]\label{lem:witness}
If $x$ is nonminimal in a realization of $\mathcal D_{n,w}$, then before termination the algorithm
queries some pair $\{x,y\}$ and receives the answer $y\prec x$.
\end{lemma}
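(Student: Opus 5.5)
The plan is an adversary (fooling) argument. Fix a realization $P$ drawn from $\mathcal D_{n,w}$ and a nonminimal element $x$. Assume, for contradiction, that the exact deterministic algorithm $A$ terminates on $P$ without ever receiving an answer of the form $y\prec x$ on a query $\{x,y\}$. From $P$ we will build a second legal input $P'$ of width at most $w$ with two properties: every query made by $A$ on $P$ gets the same answer under $P'$, and $x\in\Min(P')$. Since $A$ is deterministic, it runs identically on $P$ and $P'$. It therefore outputs the same set on both, but $\Min(P)\neq\Min(P')$, contradicting exactness.

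The construction uses the structure of $\mathcal D_{n,w}$: $P$ is a disjoint union of chains $K_1,\dots,K_w$ given by the color classes. Let $x$ lie in chain $K=K_c$, and write $K=\{z_1\prec z_2\prec\cdots\prec z_m\}$ with $x=z_j$ and $j\ge2$. Define $P'$ by splitting $K$ into the two chains $L=\{z_1,\dots,z_{j-1}\}$ and $U=\{z_j,\dots,z_m\}$, which are made mutually incomparable, and leave all other chains unchanged. In $P'$ the element $x=z_j$ is the bottom of $U$, so it is minimal.

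Width could now be $w+1$. To avoid this, merge $L$ or $U$ into another chain if one exists, or argue that some color class is empty; but that need not hold. The cleaner fix is that the algorithm only has to be exact on legal inputs of width at most $w$, so $P'$ must be legal. I would first try the alternative modification: keep a single chain and move $x$ to the bottom. In $P'$ the chain order becomes $x\prec z_1\prec\cdots\prec z_{j-1}\prec z_{j+1}\prec\cdots$. This is still a union of $w$ chains, so it is legal, and $x$ is minimal.

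Next we check that every answer is preserved. Only queries involving $x$ and another element of $K$ can change. If $y=z_i$ with $i>j$, then $x\prec y$ holds in both $P$ and $P'$. If $y=z_i$ with $i<j$, then $P$ answers $y\prec x$, and by hypothesis no such query was asked. All other pairs keep their relations. So the transcript on $P'$ is identical, and $x$ is minimal in $P'$ but not in $P$. This gives the contradiction.

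The main obstacle was keeping $P'$ of width at most $w$. Relocating $x$ to the bottom of its own chain, instead of splitting the chain, resolves it, and I expect the remaining verification to be routine. Exactness on $P'$ is needed only because $P'$ is a legal input; it does not have to lie in the support of $\mathcal D_{n,w}$, though it does, as a recolored permutation.
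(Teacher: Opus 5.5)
Your final argument is correct and is essentially the paper's proof: fix the colors, move $x$ to the bottom of its own chain while keeping the relative order of all other elements, and check that every queried answer is unchanged. Discarding the chain-splitting idea was right for a second reason besides width: it would also turn any previously comparable answer between $L$ and $U$ into incomparable. Also, $P'$ has the same coloring with a changed permutation, so it is not a recoloring.
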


\begin{proof}
Suppose not.  Let $a=C_x$.  Keep all colors fixed and modify $\pi$ by moving $x$ before every
other color-$a$ element, preserving the relative order of all other vertices.  Then $x$ becomes
minimal.  Every previous query answer is unchanged: queries not involving $x$ are unchanged;
queries between $x$ and another color remain incomparable; a queried relation $x\prec z$ remains
true; and by assumption no queried relation $z\prec x$ exists.  The deterministic algorithm
therefore has the same transcript and output on two legal inputs with different minimal sets,
contradicting exactness.
\end{proof}

Let
\[
K=\bigl|\{c\in[w]:\exists x,\ C_x=c\}\bigr|.
\]
There are $n-K$ nonminimal vertices.  By \Cref{lem:witness}, each is the larger endpoint of at
least one comparable query, and a comparable query has only one larger endpoint.  Hence
\[
N_=\ge n-K.
\]
Since
\[
\E K=w\left(1-\left(1-\frac1w\right)^n\right),
\]
we obtain
\begin{equation}\label{eq:comp}
\E N_=
\ge
n-w+w\left(1-\frac1w\right)^n.
\end{equation}

\section{Incomparable queries}

Fix distinct colors $a,b\in[w]$ and set
\[
S_{ab}=\{x:C_x\in\{a,b\}\},
\qquad
M_{ab}=|S_{ab}|.
\]
Then
\begin{equation}\label{eq:Mbin}
M_{ab}\sim\operatorname{Bin}\left(n,\frac2w\right).
\end{equation}

Condition on $S_{ab}$, on $\pi$, and on all colors $(C_x)_{x\notin S_{ab}}$.
Write $m=|S_{ab}|$ and encode the remaining colors by
\[
\sigma_x=
\begin{cases}
0,&C_x=a,\\
1,&C_x=b.
\end{cases}
\]
The $2^m$ assignments $\sigma\in\{0,1\}^{S_{ab}}$ are equiprobable.  This remains true even
though different assignments may use different numbers of nonempty color classes: the hard
distribution is uniform over labeled color assignments in $[w]^n$, and every such assignment is
legal because the promise is $\width(P)\le w$.

For a fixed execution, define the pair-query graph $G^{ab}$ on vertex set $S_{ab}$ by adding an
edge whenever the algorithm queries two vertices in $S_{ab}$.  For such a query,
\[
x,y\text{ comparable}\iff \sigma_x=\sigma_y,
\qquad
x\parallel y\iff \sigma_x\ne\sigma_y.
\]
A query whose endpoints lie in distinct connected components of the current $G^{ab}$ is called
an \emph{external merge}.

\begin{lemma}[Component-flip pairing]\label{lem:pairing}
Fix a decision-tree node $T$ at which the next query is $\{x,y\}\subseteq S_{ab}$ and $x,y$ lie
in different connected components of the current $G^{ab}$.  Let $\Omega_T$ be the set of binary
assignments reaching $T$.  Then
\[
\bigl|\{\sigma\in\Omega_T:\sigma_x=\sigma_y\}\bigr|
=
\bigl|\{\sigma\in\Omega_T:\sigma_x\ne\sigma_y\}\bigr|.
\]
\end{lemma}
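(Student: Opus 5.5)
We construct an involution on $\Omega_T$ that swaps the two events. Let $H$ be the connected component of $x$ in the current graph $G^{ab}$ at node $T$; by hypothesis $y\notin H$. For $\sigma\in\{0,1\}^{S_{ab}}$ define $\phi(\sigma)$ by complementing $\sigma$ on $H$:
\[
\phi(\sigma)_z=1-\sigma_z\ (z\in H),\qquad \phi(\sigma)_z=\sigma_z\ (z\notin H).
\]
Clearly $\phi$ is an involution on $\{0,1\}^{S_{ab}}$. Since exactly one of $x,y$ lies in $H$, it exchanges the conditions $\sigma_x=\sigma_y$ and $\sigma_x\neq\sigma_y$. The claim follows once we show that $\phi$ maps $\Omega_T$ into itself: $\phi$ is then a bijection from $\{\sigma\in\Omega_T:\sigma_x=\sigma_y\}$ onto $\{\sigma\in\Omega_T:\sigma_x\ne\sigma_y\}$.

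To see that $\phi(\sigma)\in\Omega_T$, we check that every answer on the root-to-$T$ path is the same under $\sigma$ and $\phi(\sigma)$, with $\pi$ and the colors outside $S_{ab}$ held fixed. The deterministic algorithm then follows the same path and reaches $T$. Consider a query $\{u,v\}$ on this path, in four cases.

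\begin{enumerate}[label=(\roman*)]
\item Neither endpoint is in $S_{ab}$. Then neither color changes, so the answer is unchanged.
\item Exactly one endpoint is in $S_{ab}$. That endpoint has color in $\{a,b\}$ under both assignments, while the other endpoint's color lies outside $\{a,b\}$. So the answer is $\parallel$ in both cases.
\item Both endpoints are in $S_{ab}$. Then $\{u,v\}$ is an edge of $G^{ab}$ at $T$, because the graph at $T$ contains every earlier query inside $S_{ab}$. Hence $u,v$ are both in $H$ or both outside $H$. In either case the relation $\sigma_u=\sigma_v$ is preserved by $\phi$. If the pair was comparable, the direction is decided by $r(u)<r(v)$, which depends only on $\pi$ and is fixed.
\end{enumerate}

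Thus all answers agree. The assignment $\phi(\sigma)$ is also a legal input: every color assignment in $[w]^n$ gives width at most $w$, and it has the same probability, as noted before the lemma. So $\phi(\Omega_T)\subseteq\Omega_T$.

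The main point needing care is the use of the \emph{current} graph at $T$, as opposed to the graph at the end of the execution. The set $H$ is defined from queries already made, and those are common to all $\sigma\in\Omega_T$ because the path to $T$ is fixed. The component $H$ therefore does not depend on $\sigma$, so $\phi$ is a single well-defined map on $\Omega_T$. The same fact guarantees that every past query inside $S_{ab}$ is an edge at $T$, which is what case (iii) uses. The remaining issue is the direction of comparable answers in case (iii), and that is handled because $\pi$ is conditioned on.
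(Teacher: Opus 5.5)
Your proof is correct and is essentially the paper's argument: complement the labels on the current component of $x$, check that every earlier answer is preserved, and conclude that this involution of $\Omega_T$ swaps the two events. (Earlier answers are preserved because no earlier query joins that component to the rest of $S_{ab}$, queries to vertices outside $S_{ab}$ stay $\parallel$, and orientations are fixed by $\pi$.) The only blemish is cosmetic: you announce four cases but list three, and those three already cover every possibility.
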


\begin{proof}
Let $C$ be the current connected component containing $x$.  Define
\[
(\Phi\sigma)_z=
\begin{cases}
1-\sigma_z,&z\in C,\\
\sigma_z,&z\notin C.
\end{cases}
\]
Then $\Phi^2=\mathrm{id}$.

Every query made before $T$ has the same answer under $\sigma$ and $\Phi\sigma$.  If both
endpoints lie in $C$, equality or inequality of their labels is preserved, and the orientation
of a comparable answer is preserved because $\pi$ is fixed.  There was no previous query with
one endpoint in $C$ and the other in $S_{ab}\setminus C$, since such a query would connect the
components.  If exactly one endpoint lies in $C$ and the other lies outside $S_{ab}$, the latter
has color outside $\{a,b\}$, so the answer is incomparable before and after the flip.  Queries
disjoint from $C$ are unchanged.  Since $A$ is deterministic, $\Phi\sigma$ reaches the same node
$T$.  Thus $\Phi$ is an involution of $\Omega_T$.

Because $x\in C$ and $y\notin C$,
\[
\sigma_x=\sigma_y
\quad\Longleftrightarrow\quad
(\Phi\sigma)_x\ne(\Phi\sigma)_y.
\]
Hence $\Phi$ bijects the two sets in the statement.
\end{proof}

\begin{lemma}[Terminal connectivity]\label{lem:connectivity}
If $M_{ab}>0$, then the terminal pair-query graph $G^{ab}$ is connected.
\end{lemma}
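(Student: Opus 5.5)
We argue by contradiction with an indistinguishability argument in the spirit of \Cref{lem:witness} and \Cref{lem:pairing}. We fix a realization (colors and $\pi$) with $M_{ab}>0$ and consider the terminal leaf $T$ reached by the deterministic exact algorithm $A$. We suppose the terminal graph $G^{ab}$ is disconnected, and pick a connected component $C$ of it. We then apply the component flip $\Phi$ from the proof of \Cref{lem:pairing}, which complements the $\{a,b\}$-labels on $C$ and leaves everything else (including $\pi$ and all colors outside $S_{ab}$) unchanged.

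The first step is to check that $\Phi\sigma$ produces the same transcript, so $A$ reaches the same leaf $T$ and outputs the same set. The verification from \Cref{lem:pairing} applies verbatim to all queries made. Queries inside $C$ keep their equal/unequal status and orientation, since $\pi$ is fixed. No query joins $C$ to $S_{ab}\setminus C$, because $C$ is a component of the terminal graph. Queries from $C$ to vertices outside $S_{ab}$ stay incomparable. All other queries are untouched. Both inputs are legal, since every labeled color assignment gives width at most $w$.

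The second step, which I expect to be the only real obstacle, is to show that the two inputs have different minimal sets. This then contradicts exactness. Because $G^{ab}$ is disconnected and $C$ is one component, $S_{ab}\setminus C$ is nonempty. Let $u$ be the $\pi$-first vertex of $S_{ab}$ overall. We split into cases according to whether $u\in C$ or $u\notin C$; by symmetry it suffices to study one color class.

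\emph{Case $u\in C$, say $\sigma_u=a$.} Then $u$ is minimal in the original input, since it is first among color $a$. Take the $\pi$-first vertex $v$ of $S_{ab}\setminus C$. If $v$ currently has color $a$, then after the flip $u$ has color $b$ and $v$ remains color $a$. Then the minimal color-$a$ element changes: before it is $u$, after it is some element of $S_{ab}\setminus C$ or of $C$, but not $u$. Hence $\Min$ changes.

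\emph{Case $v$ has color $b$.} Then after the flip $u$ and $v$ share color $b$ with $u\prec v$. The minimal element of color $b$ was at most $v$ before the flip, and is $u$ after the flip. These differ unless the old $b$-minimum was already $u$, which is impossible since $u$ had color $a$.

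\emph{Case $u\notin C$.} The same reasoning applies with the roles exchanged, using the $\pi$-first vertex of $C$.

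A cleaner way to phrase all of this is through the set of $\{a,b\}$-minimal elements. Before the flip, the two minima are the first $a$-vertex and the first $b$-vertex. The flip changes the color of $u$ exactly when $u\in C$; otherwise it changes the color of the first vertex of $C$ relative to $u$. In every case the pair of minima changes as a set, because $u$ remains a minimum while the other minimum moves between $C$ and its complement. This is the point I would write out carefully, possibly handling separately the degenerate situation where one color class in $S_{ab}$ is empty (there $\Min$ gains or loses an element, so it still differs).

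Since $\Min$ restricted to $S_{ab}$ differs between the two inputs and elements outside $S_{ab}$ are unaffected, $A$ outputs the same set on two legal inputs with different answers. This contradicts exactness, so $G^{ab}$ is connected.
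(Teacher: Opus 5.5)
Your first step is correct: flipping a terminal component $C$ preserves every answer, exactly as in \Cref{lem:pairing}. The gap is in the second step, and it cannot be fixed just by writing it out carefully. If all you know about $C$ is that it has no queried edges to $S_{ab}\setminus C$, the flip need \emph{not} change $\Min$. Write $\Min\cap S_{ab}=\{u,t\}$, where $t$ is the $\pi$-first vertex whose color differs from that of $u$; the flip can simply exchange the colors of $u$ and $t$. For instance, take $S_{ab}=\{u,t,v\}$ in $\pi$-order with colors $a,b,a$ and $C=\{u,t\}$. After the flip the colors are $b,a,a$, and $\Min\cap S_{ab}=\{u,t\}$ both before and after. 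This is exactly your subcase ``$u\in C$, $v$ of color $a$''. The color-$a$ minimum does change, but $\Min$ is an unlabeled set, so it need not change. Colors $a,b,b$ with $C=\{u,t\}$, or with $C=\{v\}$, defeat your other two cases in the same way. Your ``cleaner'' formulation also fails on these examples, since $t$ never leaves its side. Moreover, with components $\{u,t\}$ and $\{v\}$, every combination of component flips with $\pi$ fixed gives the same $\Min$. So no argument that only flips colors can reach a contradiction here.

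The missing ingredient is \Cref{lem:witness}, whose proof varies $\pi$ rather than the colors. It gives every nonminimal vertex of $V_a$ a queried witness in $V_a$ that precedes it, and following witnesses reaches the color-$a$ minimum. Hence $V_a$, and likewise $V_b$, is connected in the terminal $G^{ab}$. In the examples above, the witness of $v$ forces the edge $uv$, resp.\ $tv$, so those component structures never occur. Consequently, a disconnected terminal $G^{ab}$ has exactly the two components $V_a$ and $V_b$, both nonempty. Flipping $C=V_b$ then makes $S_{ab}$ monochromatic, so $\Min\cap S_{ab}$ shrinks from two elements to one, which contradicts exactness. With this step added, your argument becomes the paper's proof, whose recoloring of $V_b$ to $a$ is precisely this flip.
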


\begin{proof}
Let $V_a=\{x:C_x=a\}$ and suppose $V_a\ne\varnothing$.  Let $m_a$ be its minimum.
By \Cref{lem:witness}, every $x\in V_a\setminus\{m_a\}$ has a queried witness $y\prec x$ with
$y\in V_a$.  Repeatedly following such witnesses strictly decreases permutation rank and
therefore reaches $m_a$.  Thus $V_a$ is connected in the terminal $G^{ab}$.  The same holds for
$V_b$.

If both $V_a,V_b$ are nonempty and form distinct terminal components, then no query was ever made
between them.  Recolor every vertex of $V_b$ with color $a$, leaving all other colors and $\pi$
unchanged.  Every queried answer is preserved.  The modified input is still legal, while the two
previous chain minima become comparable and only the earlier one in $\pi$ remains minimal.
Thus the terminal transcript would have two different correct outputs, contradicting exactness.
\end{proof}

If $m=M_{ab}>0$, the graph starts with $m$ components and ends with one.  Each external merge
reduces the number of components by one, and no other query does.  Hence every execution has
exactly
\begin{equation}\label{eq:merges}
(m-1)_+
\end{equation}
external merges.

Let $Q_{ab}$ be the number of incomparable external merges for the pair $\{a,b\}$.

\begin{lemma}[Pair charge]\label{lem:paircharge}
For every unordered pair $\{a,b\}$,
\[
\E Q_{ab}=\frac12\E(M_{ab}-1)_+.
\]
\end{lemma}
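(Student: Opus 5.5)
We condition on $S_{ab}$, on $\pi$, and on $(C_x)_{x\notin S_{ab}}$, so that $m=M_{ab}$ is fixed and $\sigma$ is uniform on $\{0,1\}^{S_{ab}}$. It then suffices to show the conditional identity $\E[Q_{ab}\mid\cdot\,]=\frac12(m-1)_+$ and average over the conditioning. If $m=0$ both sides vanish. So assume $m\ge1$.

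We write $Q_{ab}$ as a sum over decision-tree nodes. For each node $T$ of the (finite, since $A$ is exact and terminates) decision tree at which the next query $\{x,y\}$ is an external merge for the current $G^{ab}$, let $E_T$ be the event that $\sigma$ reaches $T$. We have
\[
Q_{ab}=\sum_T \1[\sigma\in\Omega_T]\,\1[\sigma_x\ne\sigma_y],
\qquad
(m-1)=\sum_T \1[\sigma\in\Omega_T],
\]
where the second identity is \eqref{eq:merges}: every execution performs exactly $m-1$ external merges, by \Cref{lem:connectivity}.

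Note that whether the query at $T$ is an external merge depends only on the transcript up to $T$. Indeed, the graph $G^{ab}$ is determined by which queried pairs lie in $S_{ab}$, and $S_{ab}$ is fixed by the conditioning. So the indicator classification is well defined node by node.

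Taking expectations over uniform $\sigma$ and applying \Cref{lem:pairing} at each such node gives
\[
\Pr[\sigma\in\Omega_T,\ \sigma_x\ne\sigma_y]=\tfrac12\Pr[\sigma\in\Omega_T].
\]
Summing over $T$ yields $\E[Q_{ab}\mid\cdot\,]=\frac12\E[m-1\mid\cdot\,]=\frac12(m-1)_+$. Averaging over the conditioning gives the lemma.

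The only delicate point is that the node-wise pairing needs each reaching set $\Omega_T$ to be closed under the component flip. This is exactly what \Cref{lem:pairing} supplies, given that $\pi$ and the outside colors are held fixed. The sum over $T$ is finite, so interchanging sum and expectation is routine.
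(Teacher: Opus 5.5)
Your proof is correct and follows the paper's argument. You condition on $S_{ab}$, $\pi$, and the colors outside $S_{ab}$, split the external-merge occurrences by decision-tree node, use the component-flip pairing to get that exactly half the assignments reaching each such node give an incomparable answer, and combine this with the fact that every execution makes exactly $(m-1)_+$ external merges; your indicator-sum and linearity-of-expectation phrasing restates the paper's count of $2^m(m-1)_+$ merge occurrences, and the event $E_T$ you introduce is never used since you work with $\Omega_T$ directly.
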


\begin{proof}
Under the conditioning above, there are $2^m$ equiprobable binary assignments, each having
$(m-1)_+$ external merges by \eqref{eq:merges}.  Thus the total number of external-merge
occurrences is $2^m(m-1)_+$.  Partition these occurrences by decision-tree node.  At every such
node, \Cref{lem:pairing} pairs the compatible assignments yielding a comparable merge with those
yielding an incomparable merge.  Exactly half of all external-merge occurrences are therefore
incomparable.  Averaging over the $2^m$ assignments gives
\[
\E[Q_{ab}\mid S_{ab},\pi,(C_x)_{x\notin S_{ab}}]
=\frac{(m-1)_+}{2}.
\]
Removing the conditioning proves the claim.
\end{proof}

From \eqref{eq:Mbin},
\[
\E M_{ab}=\frac{2n}{w},
\qquad
\Pr(M_{ab}=0)=\left(1-\frac2w\right)^n.
\]
Since $(M-1)_+=M-\1_{\{M>0\}}$,
\begin{equation}\label{eq:pairvalue}
\E Q_{ab}
=
\frac nw-\frac12+\frac12\left(1-\frac2w\right)^n.
\end{equation}

\begin{lemma}[Unique ownership]\label{lem:ownership}
For every execution,
\[
\sum_{1\le a<b\le w}Q_{ab}\le N_\parallel.
\]
\end{lemma}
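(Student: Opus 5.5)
The plan is to show that each incomparable query is counted in at most one of the quantities $Q_{ab}$; summing over the queries that returned $\parallel$ then gives the inequality. Fix an execution, meaning a realization of $\mathcal D_{n,w}$ together with the deterministic algorithm's transcript. Every query contributing to some $Q_{ab}$ returned $x\parallel y$. So $\sum_{a<b}Q_{ab}$ counts pairs (query, color pair) in which the query is an incomparable external merge for $G^{ab}$. It therefore suffices to prove that a single query $\{x,y\}$ is an incomparable external merge for at most one unordered pair $\{a,b\}$.

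The key observation is that membership in $G^{ab}$ requires both endpoints to lie in $S_{ab}$. In the realized execution, $x$ and $y$ have fixed colors $C_x$ and $C_y$. For the query to be counted in $Q_{ab}$ we need $\{x,y\}\subseteq S_{ab}$, that is, $C_x,C_y\in\{a,b\}$. Moreover the answer is $\parallel$, so by \eqref{eq:hard} the colors differ: $C_x\ne C_y$. Hence $\{C_x,C_y\}=\{a,b\}$ exactly, and the pair $\{a,b\}$ is uniquely determined by the query. This is the ``owner'' of the query. Queries answered $\parallel$ that are not external merges for their owner contribute nothing. Every query is thus counted at most once, and only if it is incomparable, which gives $\sum_{a<b}Q_{ab}\le N_\parallel$.

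The only subtlety is that $Q_{ab}$ was defined in \Cref{lem:paircharge} under a conditioning in which the colors within $S_{ab}$ vary. The ownership statement, however, is pathwise. For each realized input, $G^{ab}$ and the notion of external merge are determined by the actual transcript, and the count $Q_{ab}$ is a function of that execution. So nothing probabilistic is needed here. I would only remark that the counting is valid simultaneously for all pairs $\{a,b\}$ on the same execution, which is what lets the expectations from \eqref{eq:pairvalue} be summed against $\E N_\parallel$ afterwards.
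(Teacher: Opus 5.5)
Your proposal is correct and follows essentially the same argument as the paper. An incomparable query counted by $Q_{ab}$ has both endpoints in $S_{ab}$ with distinct true colors, so $\{C_x,C_y\}=\{a,b\}$, and each $\parallel$-answered query is therefore charged to at most one color pair.
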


\begin{proof}
If an incomparable physical query $\{x,y\}$ is counted by $Q_{ab}$, then its true colors
$c=C_x$ and $d=C_y$ satisfy $c\ne d$ and $c,d\in\{a,b\}$.  Hence
\[
\{c,d\}=\{a,b\}.
\]
Thus the query is charged to at most one unordered color pair.
\end{proof}

Taking expectations and using linearity,
\begin{align}
\E N_\parallel
&\ge
\sum_{a<b}\E Q_{ab}\notag\\
&=
\binom w2
\left[
\frac nw-\frac12+\frac12\left(1-\frac2w\right)^n
\right]\notag\\
&=
\frac{w-1}{2}n-\frac{w(w-1)}4
+\frac{w(w-1)}4\left(1-\frac2w\right)^n.
\label{eq:incomp}
\end{align}

\section{Proof of the main theorem}

Combining \eqref{eq:split}, \eqref{eq:comp}, and \eqref{eq:incomp}, every deterministic exact
algorithm satisfies
\begin{align*}
\E_{P\sim\mathcal D_{n,w}}Q_A(P)
&\ge
\frac{w+1}{2}n-\frac{w(w+3)}4
+w\left(1-\frac1w\right)^n\\
&\qquad
+\frac{w(w-1)}4\left(1-\frac2w\right)^n
=:L_{n,w}.
\end{align*}

Let $\mathcal A$ be a Las Vegas randomized algorithm with internal randomness $R$, and let $A_r$
be the deterministic algorithm obtained by fixing $R=r$. Write $Q(P,r)$ for the number of queries made by $A_r$ on input $P$. By the Las Vegas property, for every legal input $P$,
\[
\Pr_R\!\left(
A_R(P)\text{ terminates and outputs }\operatorname{Min}(P)
\right)=1.
\]
Since the set of legal posets on the fixed labeled ground set is finite,
outside a single null set of random seeds, $A_r$ terminates and is
correct on every legal input. Thus $A_r$ is exact for almost every $r$. Hence for almost every $r$,
\[
\E_{P\sim\mathcal D_{n,w}}Q(P,r)\ge L_{n,w}.
\]
Averaging over $R$ and using Tonelli's theorem,
\[
\E_{P\sim\mathcal D_{n,w}}\E_RQ(P,R)\ge L_{n,w}.
\]
Thus some legal $P^*$ satisfies $\E_RQ(P^*,R)\ge L_{n,w}$, and therefore
\[
\sup_{\width(P)\le w}\E_RQ(P,R)\ge L_{n,w}.
\]
Taking the infimum over Las Vegas algorithms proves the lower bound in \Cref{thm:main}.

For fixed $w$,
\[
R^{\mathrm{LV}}_{n,w}
\ge
\frac{w+1}{2}n-O_w(1).
\]
Daskalakis et al.~\cite{DKMRV} give
\[
R^{\mathrm{LV}}_{n,w}
\le
\frac{w+1}{2}n+O_w(\log n).
\]
Hence
\[
\lim_{n\to\infty}\frac{R^{\mathrm{LV}}_{n,w}}n
=
\frac{w+1}{2}.
\]
\qed

\section*{AI Assistance}
Generative AI was used in the preparation of this manuscript.


\begin{thebibliography}{9}

\bibitem{FaigleTuran}
U.~Faigle and Gy.~Tur\'an,
\newblock Sorting and recognition problems for ordered sets,
\newblock \emph{SIAM Journal on Computing} 17(1):100--113, 1988.

\bibitem{BoldiChierichettiVigna}
P.~Boldi, F.~Chierichetti, and S.~Vigna,
\newblock Pictures from Mongolia: Extracting the top elements from a partially ordered set,
\newblock \emph{Theory of Computing Systems} 44(2):269--288, 2009.

\bibitem{DKMRV}
C.~Daskalakis, R.~M. Karp, E.~Mossel, S.~J. Riesenfeld, and E.~Verbin,
\newblock Sorting and selection in posets,
\newblock \emph{SIAM Journal on Computing} 40(3):597--622, 2011.
\newblock Preliminary version in SODA 2009; \href{https://arxiv.org/abs/0707.1532}{arXiv:0707.1532}.

\end{thebibliography}
\end{document}